\documentclass[runningheads]{llncs}
 
\usepackage[T1]{fontenc}

\usepackage{graphicx}

\usepackage{makeidx}   
\usepackage{graphicx}
\usepackage{color}
\usepackage{lineno,hyperref}

\usepackage{amsfonts}
\usepackage{amsmath}
\usepackage{amsthm}  
\usepackage{enumerate}
\usepackage{makecell}
\usepackage{appendix}

\usepackage{mathrsfs}
\usepackage{cases}
\usepackage{xcolor}

\newtheorem{reduction}{Reduction Rule}

\usepackage{tcolorbox}

\usepackage{fancyhdr}
\usepackage{ifthen}
\usepackage{todonotes}

\newcommand{\liu}[1]{{\color{black} #1}}

\begin{document}
 
\title{Linear Kernels for $l$-Exact Component Order Connectivity}

\author{
    Yuxi Liu\orcidID{0009-0009-2171-9042}
    \and
    Mingyu Xiao\orcidID{0000-0002-1012-2373}
}
 
\authorrunning{Y. Liu and M. Xiao}

\institute{University of Electronic Science and Technology of China, Chengdu, China
\email{yuxiliu823@gmail.com, myxiao@uestc.edu.cn}}

\maketitle               
 
\begin{abstract}
    The \textsc{$l$-Exact Component Order Connectivity} problem asks whether, given an input graph $G$ and an integer $k$, there exists a vertex subset $S\subseteq V(G)$ of size at most $k$ such that every connected component in $G - S$ has exactly $l$ vertices. In this paper, we present an $O(kl)$-vertex kernel for this problem, computable in $|V(G)|^{O(l)}$ time. This is the first known linear kernel for each fixed $l\geq 3$. For $l=1$, this problem reduces to the classical \textsc{Vertex Cover}, and our result matches the best-known $2k$-vertex kernel. For $l=2$ (known as \textsc{Deletion to Induced Matching}), we can get a $(3k + 1)$-vertex kernel,  improving the previously known result of $6k$ vertices. Our kernelization algorithm is built upon on an extended crown decomposition combined with linear programming and other techniques.

\keywords{Graph Algorithms \and $l$-Exact Component Order Connectivity \and Deletion to Induced Matching \and Linear Programming \and Linear Kernel.}
\end{abstract}

\section{Introduction}
Given a graph $G$ and an integer $k$,
the classic \textsc{Vertex Cover} problem asks whether there exists a vertex subset $S$ of size at most $k$ such that  every connected component in the remaining graph after deleting $S$ contains exactly one vertex.
This structural perspective naturally motivates the following generalization: can we delete at most $k$ vertices from $G$ such that every connected component in the resulting graph contains exactly $l$ vertices?
Formally, for every integer $l \geq 1$, we consider the following problem, called \textsc{$l$-Exact Component Order Connectivity} ($l$-ECOC).

\begin{tcolorbox} 
\textsc{$l$-Exact Component Order Connectivity} ($l$-ECOC)\\
\textbf{Instance:} A graph $G=(V, E)$ and an integer $k$.\\
\textbf{Question:} Does there exist a vertex subset $S\subseteq V$ of size at most $k$ such that every connected component in $G - S$ has exactly $l$ vertices?
\vspace{-2mm}
\end{tcolorbox}

\vspace{2mm}
\noindent
\textbf{Related work.}
When $l = 1$, \textsc{$1$-Exact Component Order Connectivity} corresponds exactly to \textsc{Vertex Cover}, which has been extensively studied~\cite{balasubramanian1998improved,buss1993nondeterminism,chandran2005refined,chen2001vertex,chen2000improvement,niedermeier1999upper,niedermeier2003efficient}.
For a long period time, the algorithm by Chen et al.~\cite{chen2010improved} held the best-known running time of $O^*(1.2738^k)$.
Recently, this result was improved to $O^*(1.25284^k)$ by Harris and Narayanaswamy~\cite{DBLP:conf/stacs/0001N24}.
When $l = 2$, \textsc{$2$-Exact Component Order Connectivity} is known as 
\textsc{Deletion to Induced Matching}.
Xiao and Kou \cite{xiao2020parameterized} showed that \textsc{Deletion to Induced Matching} can be solved in $O^*(1.7485^k)$ time, a bound independently achieved by Kumar et al.~\cite{kumar2020deletion}.
Recently, this running time was improved to $O^*(1.6477^k)$ time by Liu and Xiao~\cite{liu2025improved}.

In terms of kernelization, 
Chen et al.~\cite{chen2001vertex} first observed that the Nemhauser-Trotter theorem~\cite{nemhauser1974properties} directly yields a $2k$-vertex kernel for \textsc{Vertex Cover}.
This is considered a fundamental result in the field of parameterized complexity.
For \textsc{Deletion to Induced Matching}, Moser and Thilikos~\cite{moser2009parameterizedregular} provided a kernel with $O(k^3)$ vertices, which Mathieson and Szeider~\cite{mathieson2012editing} subsequently improved to $O(k^2)$ vertices.
Xiao and Kou~\cite{xiao2020parameterized} obtained the first linear vertex kernel of $7k$ vertices for this problem, which was recently improved to $6k$~\cite{liu2025improved}.

A closely related problem is the \textsc{$l$-Component Order Connectivity} problem, which asks whether there exists a vertex set $S$ of size at most $k$ such that every connected component in $G - S$ has at most $l$ vertices.
Drange et al.~\cite{drange2016computational} provided a parameterized algorithm running in $O^*((l + 1)^k)$ time, along with a kernel of size $O(kl(l+k))$.
The parameterized algorithm can be improved to $O^*((l + 1 - 0.9245)^k)$ time by reducing to \textsc{$(l + 1)$-Hitting Set} and applying the corresponding parameterized algorithm~\cite{fomin2010iterative}.
The kernel size was improved to $9lk$ vertices by Xiao~\cite{xiao2017linear}, and further to $3lk$ vertices by Casel et al.~\cite{casel2020balanced}.
Kumar and Lokshtanov~\cite{DBLP:conf/iwpec/KumarL16} showed that this problem admits a kernel with $2lk$ vertices computable in $|V(G)|^{O(l)}$ time.

\vspace{2mm}
\noindent
\textbf{Our contributions.}
In this paper, we show that \textsc{$l$-Exact Component Order Connectivity} admits a kernel with at most $(l + 1)k + l - 1$ vertices computable in $|V(G)|^{O(l)}$ time.
Specifically, for $l = 1$, our kernel matches the classic $2k$-vertex kernel for \textsc{Vertex Cover}.
For $l = 2$, we obtain a kernel with $3k + 1$ vertices for \textsc{Deletion to Induced Matching}, improving upon the previous best bound of $6k$ vertices.
Furthermore, our result provides the first kernel with a linear number of vertices for any fixed $l \geq 3$.

To achieve this result, we generalize the classic VC crown decomposition for \textsc{Vertex Cover} to \textsc{$l$-Exact Component Order Connectivity},  introducing the notion of an \emph{ECOC crown decomposition}.
Similar to VC crown decomposition,
we demonstrate that whenever an ECOC crown decomposition is given, the current instance can be safely reduced.
Then, through combinatorial analysis, we prove that any sufficiently large graph is guaranteed to contain an ECOC crown decomposition.

Based on these analytical tools, our kernelization algorithm first exhaustively applies a set of basic reduction rules.
If the current instance is sufficiently small, the algorithm returns it and terminates.
Otherwise, the algorithm employs linear programming techniques to compute an ECOC crown decomposition and further reduce the instance.
If no such decomposition can be found, the algorithm correctly returns a trivial no-instance.

\vspace{2mm}
\noindent
\textbf{Organization.}
Section 2 reviews fundamental definitions and the classic VC crown decomposition.
Section 3 formalizes the concept of the ECOC crown decomposition, which serves as the core technical tool of our algorithm.
Section 4 presents the algorithm and its analysis, while the proof for the correctness of the final reduction rule is deferred to Section 5.
Section 6 concludes the paper.

\section{Preliminaries}
All graphs considered in this paper are simple and undirected.
Let $G = (V, E)$ be a graph with $n = |V|$ vertices and $m = |E|$ edges.
A singleton $\{v\}$ may simply be denoted by $v$.
We use $V(G')$ and $E(G')$ to denote the vertex set and edge set of a graph $G'$, respectively.
A vertex $v$ is a \emph{neighbor} of a vertex $u$ if there is an edge $uv \in E$.
Let $N_G(v)$ denote the set of neighbors of $v$ in $G$. 
When the graph $G$ is clear from the context, we simply write $N(v)$.
For a vertex subset $X \subseteq V$, let $N(X) = \bigcup_{v \in X} N(v) \setminus X$ and $N[X] = N(X) \cup X$.
For a vertex subset $X \subseteq V$, the subgraph induced by $X$ is denoted by $G[X]$, and $G[V \setminus X]$ is also written as $G \setminus X$ or $G - X$.
Two disjoint vertex subsets $X_1$ and $X_2$ are \emph{adjacent} if there is an edge $uv \in E$ with $u \in X_1$ and $v \in X_2$.

A \emph{path} $P$ of length $t-1$ is a sequence of distinct vertices $v_1, v_2, \dots, v_t$ such that $v_iv_{i+1} \in E$ for all $1 \leq i < t$. 
A vertex $u$ is \emph{reachable} from a vertex $v$ if there exists a path starting at $u$ and ending at $v$. 
By definition, every vertex is reachable from itself. 
A \emph{connected component} of a graph is a maximal set of vertices such that every pair of vertices in the set is mutually reachable. 
For a graph $G$, we denote the collection of its connected components by $\mathcal{C} = \{C_1, C_2, \dots, C_t\}$.
For a collection of connected components $\mathcal{C}$, we use $V(\mathcal{C})$ to denote $\bigcup_{C \in \mathcal{C}} V(C)$.
We say two instances $I$ and $I'$ are \emph{equivalent} if $I$ is a yes-instance if and only if $I'$ is a yes-instance.

The classic crown decomposition technique, called \emph{VC crown decomposition}, was originally introduced for the \textsc{Vertex Cover} problem \cite{DBLP:conf/alenex/Abu-KhzamCFLSS04,chor2004linear}.
In this paper, we will use a variant of it.
We first give the definition of VC crown decomposition for the ease of reference.

\begin{definition}\label{VC-decomposition}
    A \textit{VC crown decomposition} of a graph $G = (V, E)$ is a partition $(I, J, R)$ of the vertex set $V$ satisfying the following properties.
    \begin{enumerate}
        \item There is no edge between $I$ and $R$.
        \item $I$ is an independent set.
        \item There is an injective mapping (matching) $M: J\rightarrow I$ such that $x$ is adjacent to $M(x)$ for all $x\in J$.
    \end{enumerate}
\end{definition}

The following lemma is used to find a VC crown decomposition in polynomial time if it exists.

\begin{lemma}[\cite{DBLP:conf/iwpec/KumarL16}]\label{q-expansion-finding}
    There exists a polynomial time algorithm that given a bipartite graph $G = (A \cup B, E)$ with bipartition $(A, B)$, 
    outputs (if it exist) two sets $I \subseteq A$ and $J\subseteq B$ with an injective mapping (matching) $M: J\rightarrow I$ such that $(I, J, V(G)\setminus (I\cup J))$ is a VC crown decomposition for $G$.
\end{lemma}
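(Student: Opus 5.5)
We need a polynomial-time algorithm that, given a bipartite graph $G=(A\cup B,E)$, finds sets $I\subseteq A$, $J\subseteq B$ and a matching $M:J\to I$ such that $(I,J,V(G)\setminus(I\cup J))$ is a VC crown decomposition. Since $A$ is independent, condition~2 holds automatically for any $I\subseteq A$. So we need: $N(I)\subseteq J$ (no edge from $I$ to $R$, and $N(I)\subseteq B$), and $J$ can be matched into $I$. Additionally $I\neq\emptyset$, so the output is nontrivial.

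The plan is to reduce to Hall's theorem via the classical alternating-path construction. First compute a maximum matching $M^*$ in $G$ with Hopcroft--Karp.

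\textbf{Case 1: some vertex of $A$ is unmatched.} Let $U\subseteq A$ be the vertices of $A$ left unmatched by $M^*$. Let $Z$ be the set of vertices reachable from $U$ by $M^*$-alternating paths, that is, paths leaving $A$ along non-matching edges and entering $A$ along matching edges. Set $I=Z\cap A$ and $J=Z\cap B$.

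\begin{enumerate}
\item $N(I)\subseteq J$. A vertex of $I$ is either in $U$ or reached via a matching edge. In both cases, all of its non-matching edges lead into $Z$, and its matching partner (if it has one) is the vertex it was reached from, which already lies in $J$.
\item Every vertex of $J$ is matched by $M^*$. Otherwise the alternating path from $U$ to it would be augmenting, contradicting maximality of $M^*$.
\item The $M^*$-partner of each vertex of $J$ lies in $I$, by the definition of alternating reachability.
\end{enumerate}

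Hence $M=M^*|_J$ is an injective map $J\to I$ with $x$ adjacent to $M(x)$. Moreover $I\supseteq U\neq\emptyset$.

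\textbf{Case 2: $M^*$ saturates $A$.} If some nonempty $I\subseteq A$ with $|N(I)|\le |I|$ exists, it can still yield a crown by taking $J=N(I)$. Then Hall's condition holds for $J$ into $I$ exactly when $J$ is matchable into $I$. This is where the main subtlety lies: one must decide what ``if it exists'' means and handle the case of a perfect matching on $A$.

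\begin{enumerate}
\item If $|A|>|B|$, Case~1 always applies.
\item Otherwise, I would consider each $a\in A$ and test whether the smallest set $I\ni a$ closed under ``$N(I)$ matched into $I$'' exists. To do this, delete $a$'s matching edge, recompute alternating reachability from $a$, and check whether the resulting $J$ is fully matched back into $I$. This takes polynomially many matching computations.
\end{enumerate}

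If no test succeeds, report that no nontrivial decomposition exists. The correctness of this final claim is the step I expect to require the most care. I would argue it by contradiction: from any valid crown $(I,J,R)$, the restriction of the crown's matching together with $M^*$ yields an alternating structure that the search from some $a\in I$ must discover.
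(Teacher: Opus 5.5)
\textbf{There is a genuine gap in Case~2.} Your Case~1 is correct and complete. It is the same alternating-path construction the paper uses in the proof of Lemma~\ref{strict-LOS-Lemma}; the paper gives no proof of this lemma itself and cites Kumar and Lokshtanov. For the paper's purposes Case~1 even suffices. The lemma only needs to succeed when $G'$ contains a strict crown, $|I'|\ge|J'|+1\ge|N(I')|+1$. That violates Hall's condition for $A'$, so every maximum matching leaves a vertex of $A'$ free.

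For the statement as written, however, Case~2 is needed, and two things are missing there. First, the completeness claim (no successful test means no crown with $I\neq\emptyset$) is only announced. The suggested route of combining the crown's matching with $M^*$ is not carried out. Second, the test is under-specified in a way that matters:
\begin{enumerate}
\item If $aM^*(a)$ is removed from the matching and the final check uses the reduced matching, then $M^*(a)$ is a free neighbor of $a$ and every test fails.
\item If the edge is removed from the graph, then $M^*(a)$ need not enter $J$. For a single edge $ab$ you would output $(\{a\},\emptyset)$, which is not a crown.
\end{enumerate}
The search from $a$ must traverse all edges at $a$, and the check must be made against the original $M^*$.

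\textbf{The missing idea: in Case~2 the crown's own matching is irrelevant.} A crown always has $J=N(I)$, since each $x\in J$ is adjacent to $M(x)\in I$, and injectivity gives $|J|\le|I|$. When $M^*$ saturates $A$, we have $M^*(I)\subseteq N(I)=J$, hence $|I|\le|J|$. So $|I|=|J|$ and $M^*(I)=J$; that is, $M^*$ \emph{itself} restricts to a perfect matching between $I$ and $J$.

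Now take any $a\in I$ and run the search that follows all edges out of $A$-vertices and $M^*$-edges out of $B$-vertices. It stays inside $I\cup J$, and every $B$-vertex it reaches is $M^*$-matched to a reached $A$-vertex, so the test at $a$ succeeds. Conversely, a successful test yields a crown containing $a$, by the same three checks as in your Case~1.

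\textbf{Optional simplification.} You can avoid the loop over $a$ entirely. Let $X$ be the set of vertices reachable by $M^*$-alternating paths from the $M^*$-free vertices of $B$. Then $(A\setminus X,B\setminus X)$ is a crown containing every crown, in both cases. Hence a nontrivial crown exists if and only if $A\setminus X\neq\emptyset$.
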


\section{ECOC Crown Decomposition}

First, we present a variant of VC crown decomposition for $l$-ECOC, called \emph{ECOC crown decomposition}.

\begin{definition}\label{lECOC-LOS}
    An \textit{ECOC crown decomposition} of a graph $G = (V, E)$ is a partition $(I, J, R)$ of the vertex set $V$ satisfying the following properties:
    \begin{enumerate}
        \item Let $\mathcal{C}$ be the collection of connected components of $G[I]$. For every $C \in \mathcal{C}$, $|C| = l$.
        \item There is no edge between $I$ and $R$ (i.e., $N(I) \subseteq J$).
        \item There is an injective mapping (matching) $M: J \rightarrow \mathcal{C}$ such that for all $x \in J$, there exists a vertex $u \in V(M(x))$ with $xu \in E$.
    \end{enumerate}
\end{definition}

    \begin{figure}[!t]
        \centering
        \includegraphics[scale=0.5]{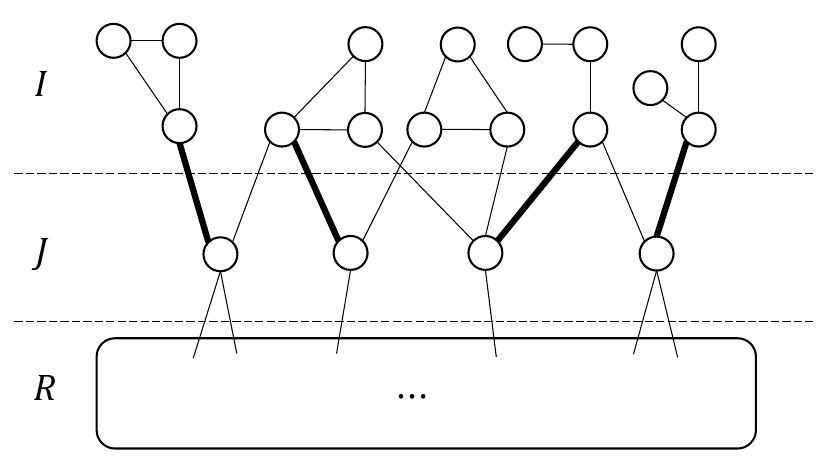}
        \caption{
        An ECOC crown decomposition $(I, J, R)$ with $l = 3$. The four bold edges form an injective matching from $J$ to the collection of connected components $\mathcal{C}$ of $G[I]$.
        }
        \label{Fig:1}
    \end{figure}

See Fig.~\ref*{Fig:1} for an illustration.
When $l = 1$, an ECOC crown decomposition is equivalent to a VC crown decomposition.
When $l = 2$, it is equivalent to the AIM crown decomposition used in~\cite{liu2025improved}.
The following lemma establishes that an ECOC crown decomposition allows us to find parts of the solution.

\begin{lemma}\label{$l$-ECOC-correctness}
    Let $(G, k)$ be an instance of $l$-ECOC and let $(I, J, R)$ be an ECOC crown decomposition of $G$. 
    Then $(G, k)$ is equivalent to the instance $(G \setminus (J\cup I), k - |J|)$.
\end{lemma}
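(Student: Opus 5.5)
We prove the two directions separately; the backward direction is immediate and the forward direction is an exchange argument.

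\emph{Backward direction.} Suppose $S'$ is a solution of $(G\setminus(J\cup I), k-|J|)$, and set $S = S' \cup J$, so that $|S| \le k$. The graph $G - S$ is the disjoint union of $G[I]$ and $G[R] - S'$. There is no edge between these two parts, because $N(I)\subseteq J$ and $J \subseteq S$. Every component of $G[I]$ has exactly $l$ vertices by property~1 of Definition~\ref{lECOC-LOS}. Every component of $G[R]-S'$ has exactly $l$ vertices because $S'$ is a solution. Hence $S$ is a solution of $(G,k)$.

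\emph{Forward direction.} Let $S$ be a solution of $(G,k)$. We will show that $S \cap R$ is a solution of $(G[R], k-|J|)$.

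First, $S\cap R$ is feasible for $G[R]$. We claim that every component $D$ of $G[R]-(S\cap R)$ is a component of $G-S$. The component $D$ is connected in $G-S$. Its vertices have neighbours only in $R\cup J$, since there are no $I$--$R$ edges. Any neighbour in $J\setminus S$ would merge $D$ with other vertices, so the claim needs care. To avoid this, we use the cleaner route of replacing $S$ by $S^* = (S\cap R)\cup J$. By the backward-direction argument, $S^*$ is a solution of $G$ exactly when $S\cap R$ is a solution of $G[R]$. So it suffices to show two things: $|S\cap R| \le k-|J|$, and $G[R]-(S\cap R)$ has all components of order $l$.

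\emph{Size bound.} We show $|S\cap (I\cup J)| \ge |J|$. For each $x\in J$, the component $M(x)$ is a connected set of exactly $l$ vertices containing a neighbour $u$ of $x$. If $x\notin S$ and $S\cap V(M(x))=\emptyset$, then $M(x)\cup\{x\}$ lies inside a single component of $G-S$, and that component has at least $l+1$ vertices, which is a contradiction. Hence for every $x\in J$, either $x\in S$ or $S$ meets $V(M(x))$. Since $M$ is injective and the sets $V(M(x))$ are disjoint, this assigns to each $x\in J$ a distinct element of $S\cap(I\cup J)$. Therefore $|S\cap R| \le k-|J|$.

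\emph{Feasibility, the main obstacle.} We must show that the components of $G[R]-(S\cap R)$ have order exactly $l$. The difficulty is that a vertex of $J\setminus S$ may glue an $R$-part to other vertices, so the components of $G-S$ that meet $R$ need not lie inside $R$. I would try to resolve this through the structure of $G - S$. Take a component $D$ of $G[R]-(S\cap R)$, and let $D'$ be the component of $G-S$ containing it, so $|D'|=l$. If $D' \subseteq R$, then $D' = D$ and we are done. Otherwise $D'$ contains some $x\in J\setminus S$. In that case $S$ must hit $M(x)$: the counting above then uses a vertex of $M(x)$ rather than $x$ itself, and that slack is unavailable for repairing $D$.

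I expect this step may genuinely fail for an arbitrary solution. If so, the fallback is an exchange argument. Choose an optimal $S$ that maximises $|S\cap J|$, and modify it to $S'' = (S\setminus I)\cup J$. We need $S''$ to be valid, and the concern is what happens to the parts of $G-S$ that were merged through $J$. Deleting all of $J$ separates $I$ from $R$, but it may split a component of $G - S$ that mixed $R$ and $J$ into a piece of order less than $l$. To control this, I would try to show that the $R$-parts of such mixed components can themselves be absorbed into the deletion set without increasing the size. This would use the unused budget freed by $S\cap I$ together with the injective matching $M$. Making this accounting precise is where I expect the real work to lie.
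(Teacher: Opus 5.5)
Your backward direction and your inequality $|S\cap(I\cup J)|\ge|J|$ are correct. The forward direction, however, is not proved, and it is the whole substance of the lemma. Your first route is in fact false: $S\cap R$ need not solve $(G[R],k-|J|)$, even when $S$ is optimal. For $l=2$, take the path $b\,a\,x\,y$ with $I=\{a,b\}$, $J=\{x\}$, $R=\{y\}$ and $k=2$. Then $S=\{a,b\}$ is an optimal solution, yet $G[R]-(S\cap R)$ is a single vertex. So the solution must genuinely be modified, and your fallback stops exactly where the argument starts. You name the right idea: delete the non-$I$ parts of the components of $G-S$ that pass through $J\setminus S$, paying with the budget freed from $S\cap I$. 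But you neither fix the set nor prove that the size does not grow. Your remark that the slack is ``unavailable'' is also the wrong intuition. In the example, $S$ spends two vertices on $M(x)$, and in general this extra spending is exactly what pays for the repair.

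\textbf{Construction.} Let $J_2=J\setminus S$, and let $\mathcal{K}$ be the set of components of $G-S$ meeting $J_2$. Let $D=V(\mathcal{K})\setminus I$, so $J_2\subseteq D$ and $D\cap S=\emptyset$, and put $S'=(S\setminus I)\cup D$. Then $J\subseteq S'$ and $S'\cap I=\emptyset$. $S'$ is feasible: $G[I]$ is cut off by $J$, and every component of $G-S$ that meets $R$ but not $J_2$ lies inside $R$ and has its whole neighbourhood in $S\setminus I$.

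\textbf{Size.} We have $|\mathcal{K}|\le|J_2|$, and each member of $\mathcal{K}$ has exactly $l$ vertices. Let $U=\bigcup_{x\in J_2}V(M(x))$, so $|U|=l|J_2|$. As you showed, $S$ meets $V(M(x))$ for each $x\in J_2$, so $V(M(x))\setminus S$ has fewer than $l$ vertices. Hence every vertex of $U\setminus S$ lies in a component of $G-S$ that leaves its component of $G[I]$. It can only do so through a vertex of $N(I)\setminus S\subseteq J_2$, so that component belongs to $\mathcal{K}$. Since $D$ and $U\setminus S$ are disjoint subsets of $V(\mathcal{K})$,
\[
|D|+|U\setminus S|\le l|\mathcal{K}|\le l|J_2|=|U|.
\]
Therefore $|D|\le|S\cap U|\le|S\cap I|$, and so $|S'|\le|S|$. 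Because $|S'\cap R|=|S'|-|J|$, the set $S'\cap R$ solves $(G[R],k-|J|)$. This is precisely the exchange in the paper's proof, which uses all components of $G[I]$ adjacent to $J_2$ instead of $U$ and obtains the same inequality.
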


\begin{proof}
    To prove this lemma, it is sufficient to show that there is an optimal solution $S'$ such that $J \subseteq S'$ and $I\cap S' = \emptyset$.
    Let $S$ be an arbitrary optimal solution. 
    We partition $J$ into $ J_1= S\cap J$ and $J_2 = J\setminus J_1$.
    Let $\mathcal{C}$ be the set of the connected components of $G[I]$. 
    Let $\mathcal{C}_2$ be the set of the connected components in $G[I]$ that are adjacent to some vertex in $J_2$. 
    Since there is an injective mapping from $J$ to $\mathcal{C}$ and every component in $\mathcal{C}$ has size $l$, we have that 
    \begin{equation}
        |V(\mathcal{C}_2)|\geq l|J_2|. 
    \end{equation}

    Let $G' = G[V\setminus S]$ be the remaining graph after removing $S$. 
    Let $\mathcal{C}'$ be the set of the connected components in $G'$ that contain at least one vertex in $J_2$.
    Let $T = V(\mathcal{C}')\cup V(\mathcal{C}_2)$. 
    We will focus on the vertex set $T$ in the remaining part of this proof.
    Let $D = V(\mathcal{C}') \setminus V(\mathcal{C}_2)$.
    Note that $J_2\subseteq D$. 
    We construct a new solution candidate $S' = S\setminus V(\mathcal{C}_2) \cup D$. 
    See Fig. \ref*{Fig:2} for an illustration.

    \begin{figure}[!t]
        \centering
        \includegraphics[scale=0.23]{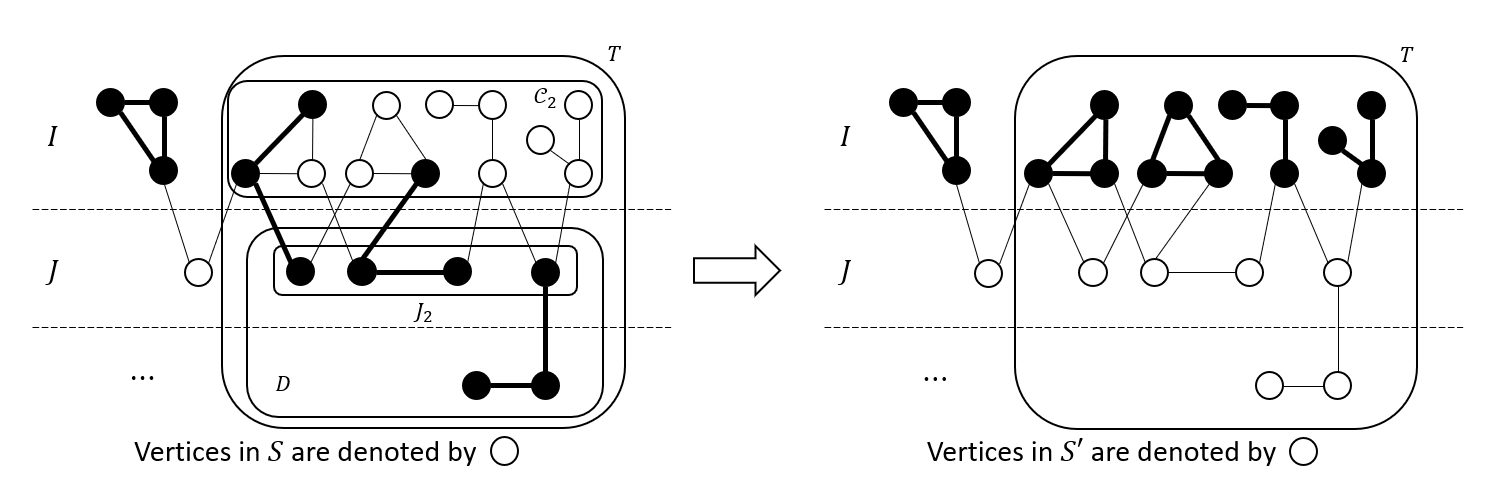}
        \caption{Sets $S$ and $S'$ in the proof of Lemma \ref*{$l$-ECOC-correctness}.}
        \label{Fig:2}
    \end{figure}

    Let $S_1 = S\setminus (J \cup I)$. By definition of $S$ and $D$, each componenent in $G[V\setminus(J\cup I\cup D)] - S_1$ has size $l$.
    Since $(J\cup D)\subseteq S'$ and $I\cap S' = \emptyset$, each componenent in $G[J\cup I\cup D] - (S'\setminus S_1)$ has size $l$.
    Thus, it is easy to see that $S'$ is a feasible solution. 

    Now we show that $S'$ is optimal.
    By the definition of $S'$, we have that
    \begin{equation}
        |S'| = |S| - |S\cap V(\mathcal{C}_2)| + |D|.  
    \end{equation}

    By the definition of $\mathcal{C}'$, we have that $V(\mathcal{C}') \cap S = \emptyset$. So we have that
    \begin{equation}
        |S\cap T| = |S\cap V(\mathcal{C}_2)|.  
    \end{equation}

    Recall that each component in $G[I]$ has size $l$.
    By the definition of $\mathcal{C}_2$, 
    if there is a component $C$ in $G'$ containing one vertex in $V(\mathcal{C}_2)$, the component $C$ must contain one vertex in $J_2$.
    Thus, there are at most $|J_2|$ connected components in $G'[T\setminus S]$.
    We get that
    \begin{equation}
        |S\cap T| \geq |T| - l|J_2|.  
    \end{equation}

    By inequality (1) and the definition of $T$, we have that
    \begin{equation}
        |D| = |T| - |V(\mathcal{C}_2)|\leq |T| - l|J_2|.  
    \end{equation}

    Equality (3) with inequalities (4) and (5) together imply that
    \begin{equation}
        |D| \leq |S\cap V(\mathcal{C}_2)|.  
    \end{equation}

    Equality (2) with inequality (6) together imply that
    \[
        |S'|=|S| - |S\cap V(\mathcal{C}_2)| + |D| \leq |S|.  
    \]

    Since $S$ is optimal, we know that $S'$ is optimal.
    Thus, there is a feasible optimal solution $S'$ such that $J \subseteq S'$ and $I\cap S' = \emptyset$. 
     
\end{proof}

We say that an ECOC crown decomposition $(I, J, R)$ is \emph{strict} if $|\mathcal{C}| \geq |J| + 1$, where $\mathcal{C}$ is the collection of connected components of $G[I]$.
For example, the ECOC crown decomposition illustrated in Fig.~\ref*{Fig:1} is strict since $|\mathcal{C}| = |J| + 1$.
Clearly, every strict ECOC crown decomposition is also an ECOC crown decomposition.
The following lemma demonstrates how to obtain a strict ECOC crown decomposition.

\begin{lemma}\label{strict-LOS-Lemma}   
    Let $G = (V, E)$ be a graph.
    Let $I \subseteq V$ be a non-empty vertex set, let $J = N(I)$, and let $\mathcal{C}$ be the collection of connected components of $G[I]$. 
    If $|C| = l$ for all $C \in \mathcal{C}$ and $|\mathcal{C}| \geq |J| + 1$, then there exists a strict ECOC crown decomposition $(I', J', R')$ such that $\emptyset \neq I' \subseteq I$ and $J' \subseteq J$.
    Furthermore, this strict ECOC crown decomposition can be found in polynomial time if $I$ is given.
\end{lemma}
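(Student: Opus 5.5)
We reduce to the classical VC crown setting by contracting each component of $G[I]$ to a single vertex. Build an auxiliary bipartite graph $H$ with bipartition $(A, B)$. Here $A$ has one vertex $a_C$ for every $C \in \mathcal{C}$, and $B = J$. Put an edge $a_C x$ in $H$ whenever $x \in J$ is adjacent in $G$ to some vertex of $C$. Since $J = N(I)$ and distinct components of $G[I]$ are non-adjacent, every neighbor outside $C$ of a vertex of $C$ lies in $J$. Because $|A| = |\mathcal{C}| \geq |J| + 1 > |B|$, the graph $H$ has more vertices on the $A$ side than on the $B$ side.

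First we show that $H$ admits a VC crown decomposition $(I_H, J_H, R_H)$ with $I_H \subseteq A$, $J_H \subseteq B$ and $|I_H| \geq |J_H| + 1$. We use the standard K\H{o}nig/Hall argument. Take a maximum matching $M$ of $H$. Since $|A| > |B|$, some vertex $a_0 \in A$ is unmatched. Let $I_H$ be the set of $A$-vertices reachable from $a_0$ by $M$-alternating paths, and let $J_H$ be the set of $B$-vertices reachable in the same way.
\begin{itemize}
\item By maximality of $M$, every vertex of $J_H$ is matched, and its partner lies in $I_H$.
\item Every $H$-neighbor of a vertex in $I_H$ lies in $J_H$, so $N_H(I_H) = J_H$.
\item $I_H$ is independent in $H$, since $A$ is an independent side.
\item $M$ restricted to $J_H$ is an injection $J_H \to I_H$ that misses $a_0$, so $|I_H| \geq |J_H| + 1$.
\end{itemize}
This gives a strict crown in $H$ and can be computed in polynomial time. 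Alternatively, one could invoke Lemma~\ref{q-expansion-finding}, but it does not promise strictness, so the direct alternating-path construction is safer.

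Next we translate back to $G$. Set $I' = \bigcup_{a_C \in I_H} V(C)$, $J' = J_H$ and $R' = V \setminus (I' \cup J')$. Each component of $G[I']$ is one of the chosen $C$, because components of $G[I]$ are pairwise non-adjacent; hence each has size $l$ and their number is $|I_H|$. The neighbors of $I'$ in $G$ lie in $J$. Moreover, any $x \in J$ adjacent to some chosen $C$ is an $H$-neighbor of $a_C \in I_H$, so $x \in J_H$. Therefore $N(I') \subseteq J'$, and there is no edge between $I'$ and $R'$. The matching from the previous step gives an injection $J' \to \mathcal{C}'$ such that each $x$ is adjacent to a vertex of its image. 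Thus $(I', J', R')$ satisfies all three properties of Definition~\ref{lECOC-LOS}, and it is strict because $|\mathcal{C}'| = |I_H| \geq |J'| + 1$. Finally, $I' \neq \emptyset$ since $a_0 \in I_H$, and clearly $I' \subseteq I$ and $J' \subseteq J$.

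The main point requiring care is the first step: the existence of a strict crown, rather than merely a crown, inside $H$. This is exactly where the surplus $|\mathcal{C}| \geq |J| + 1$ is used, through the unmatched vertex $a_0$. The remaining checks are routine, since edges in $G$ out of $I'$ correspond exactly to edges of $H$.
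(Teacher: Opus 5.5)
Your proof is correct and follows essentially the same route as the paper. The paper also builds the bipartite graph of components of $G[I]$ versus $J$, takes a maximum matching, and collects everything reachable by alternating paths from unsaturated component-vertices. The only cosmetic difference is that the paper starts these paths from the whole set $Z$ of unsaturated vertices rather than from a single $a_0$, which changes nothing in the argument.
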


\begin{proof}
We construct an auxiliary bipartite graph $G' = (V', E')$ with $V' = (A', B')$ as follows: 
    \begin{itemize}
        \item Each vertex $v \in A'$ corresponds to a connected component of size $l$ of $G[I]$.
        \item Each vertex $v \in B'$ corresponds to a vertex in the vertex set $J$. 
        \item An edge $(a, b)$ exists in $E'$ if and only if the component in $G[I]$ corresponding to $a$ is adjacent to the vertex corresponding to $b$ in $G$. 
    \end{itemize}
     
    Since $|\mathcal{C}| \geq |J| + 1$, we have that $|A'|\geq |B'| + 1$. 
    We compute a maximum matching $M'$ in $G'$, which can be done in $O(n^{5/2})$ time using the Hopcroft-Karp algorithm~\cite{hopcroft1973n}.
    The following arguments are based on the maximum matching $M'$.
    A vertex is called $M'$-saturated if it is an endpoint of an edge in $M'$. 
    A path in $G'$ that alternates between edges not in $M'$ and edges in $M'$ is called an $M'$-alternating path.

    Let $Z \subseteq A'$ be the set of vertices in $A'$ that are not $M'$-saturated.
    Since $|A'| \geq |B'| + 1$, 
    the matching $M'$ cannot saturate all vertices in $A'$.
    Thus, we have that $Z \neq \emptyset$.
    We define $I'' \subseteq A'$ and $J'' \subseteq B'$ as the sets of vertices reachable from $Z$ via $M'$-alternating paths in $G'$. 
    By definition, $Z \subseteq I''$. 
    Let $\mathcal{C}'$ and $J'$ be the components and vertices in $G$ corresponding to $I''$ and $J''$, respectively. 
    Let $I' = V(\mathcal{C}')$.
    We verify that $(I', J', V\setminus(I'\cup J'))$ forms the desired strict crown decomposition:

    \begin{itemize}
        \item \textbf{Property 1:} Since every vertex in $A'$ corresponds to a component of size $l$ in $G[I]$, the connected components of $G[I']$ are exactly $\mathcal{C}'$, and each is of size $l$.
        \item \textbf{Property 2:} We show $N(I')\subseteq J'$.
        It suffices to show that in $G'$, the neighborhood of $I''$ is a subset of $J''$. Consider any edge $(a, b) \in E'$ with $a \in I''$.
        Since $a$ is reachable from $Z$ via an alternating path, we analyze the path ending at $a$.
        If $(a, b) \in M'$, then $b$ must have been the predecessor of $a$ on the alternating path, meaning $b \in J''$ is already established.
        If $(a, b) \notin M'$, we extend an alternating path ending at $a$ by the edge $(a, b)$, placing $b$ in $J''$. 
        Thus, it holds that $N(I')\subseteq J'$.
        \item \textbf{Property 3:} Let $M'' = \{(a, b) \in M' \mid b \in J''\}$. 
        Every vertex $b \in J'$ must be saturated by $M'$; otherwise, an alternating path from $Z$ to an unsaturated vertex $b$ would constitute an $M'$-augmenting path, which contradicts the maximality of $M'$~\cite{berge1957two}. 
        For any edge $(a, b) \in M''$ with $b \in J''$, the vertex $a$ must belong to $I''$ because $a$ is trivially reachable by extending the alternating path from $b$ via the matched edge $(a, b)$. This implies $M''$ perfectly matches $J''$ into $I''$, yielding the required injective mapping.
    \end{itemize}

    Note that $I''$ contained exactly $|J''|$ vertices and at least one vertex in $Z$ since $Z \neq \emptyset$. we have $|A''| \geq |J''| + 1$. 
    Since $I''$ corresponds to the connected components $\mathcal{C}'$ of $G[I']$,
    it holds that $|\mathcal{C}'|\geq |J'| + 1$. 
     
\end{proof}

\section{The Algorithm}
Our kernelization algorithm contains several reduction rules. 
When a reduction rule is applied, we assume that no preceding reduction rule is applicable to the current instance.
We first introduce the following simple reduction rules.

\begin{reduction}\label{bound-reduction}
    If $|V(G)| \leq (l + 1)k + l - 1$, return the current instance and exit the algorithm.
\end{reduction}

\begin{reduction}\label{basic-reduction-1}
    If there exists a connected component $C$ such that $|C| < l$, then return $(G - C, k - |C|)$.
\end{reduction}

\begin{reduction}\label{basic-reduction-2}
    If there exists a connected component $C$ such that $|C| = l$, then return $(G - C, k)$.
\end{reduction}

The correctness of Reduction Rules~\ref*{bound-reduction}--\ref*{basic-reduction-2} is straightforward.
The following lemma is crucial for establishing our last reduction rule.

\begin{lemma}\label{RR3-lemma}
There exists an algorithm that, given an instance $(G, k)$ of $l$-ECOC with at least $(l + 1)k + l $ vertices to which Reduction Rules~\ref*{bound-reduction}--\ref*{basic-reduction-2} are not applicable, runs in $|V(G)|^{O(l)}$ time and either finds an ECOC crown decomposition $(I, J, R)$ of $G$ or concludes that $(G, k)$ is a no-instance.
\end{lemma}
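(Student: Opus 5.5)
We use an LP relaxation over connected $l$-sets, in the spirit of Kumar and Lokshtanov's $2lk$ kernel for $l$-COC, combined with Lemma~\ref{strict-LOS-Lemma}. Let $\mathcal{P}$ be the family of vertex sets $P\subseteq V(G)$ with $|P|=l$ and $G[P]$ connected; it has size $|V(G)|^{O(l)}$ and can be enumerated within the time bound. Consider the packing LP
\[
\max \sum_{P\in\mathcal{P}} y_P \quad \text{s.t.}\quad \sum_{P\ni v} y_P \le 1 \ \ \forall v\in V(G),\qquad y\ge 0,
\]
together with its dual, in which $x_v\ge 0$ and $\sum_{v\in P} x_v\ge 1$ for every $P\in\mathcal{P}$.

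\textbf{The counting inequality.} First we record a necessary condition for a yes-instance. If $S$ is a solution with $|S|\le k$, the components of $G-S$ are pairwise disjoint members of $\mathcal{P}$, and $|V(G)|-|S|$ of the vertices are covered by them. Since $|V(G)|\ge (l+1)k+l$, there are at least $(|V(G)|-k)/l \ge k+1$ such components. Each component must be separated from the rest of the graph by $S$. Because Reduction Rules~\ref{basic-reduction-1}--\ref{basic-reduction-2} are not applicable, every component of $G$ has more than $l$ vertices, so every component of $G-S$ has a neighbour in $S$.

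Let $I$ be the union of the components of $G-S$. Then $N(I)\subseteq S$ and the number of components exceeds $|S|\ge |N(I)|$, which is exactly the hypothesis of Lemma~\ref{strict-LOS-Lemma}. So in a yes-instance such an $I$ exists. The remaining task is to find some such $I$ in time $|V(G)|^{O(l)}$ without knowing $S$, and to answer no if none is found.

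\textbf{Algorithmic step.} Solve the LP and its dual optimally and consider an optimal fractional solution. The intended weighted inequality is
\[
\sum_{P} y_P \cdot l \ \ge\ \text{(vertices not in the dual support)},
\]
and, by complementary slackness, the vertices with $x_v=0$ should behave like an ``$I$-side'' while those with $x_v>0$ behave like a ``$J$-side''. Concretely, we would build the auxiliary bipartite graph between $\mathcal{P}$ (or a suitable subfamily) and $V(G)$ and use a weighted expansion lemma, generalizing Lemma~\ref{q-expansion-finding}, to extract a set family $\mathcal{C}$ of disjoint $l$-sets with $N(V(\mathcal{C}))=J$ and $|\mathcal{C}|\ge|J|+1$. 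This family would then be fed to Lemma~\ref{strict-LOS-Lemma}.

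The point where the exact-size requirement differs from Kumar--Lokshtanov is that the chosen $l$-sets must be entire components of $G[I]$, i.e. pairwise non-adjacent. We would enforce this by also branching, or filtering via LP support, so that only sets whose closed neighbourhoods are handled by the $J$-side are kept. For a no-answer, we would show that if the LP optimum is at most $(|V(G)|-k)/l$ below some threshold, or the extraction fails, then the counting argument above yields a contradiction with the existence of $S$.

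\textbf{Main obstacle.} Guaranteeing that the $l$-sets produced by the expansion argument are pairwise non-adjacent, so that they form genuine components of $G[I]$, is the main obstacle. The $l$-COC LP only ensures disjointness, not non-adjacency. We would first try to repeatedly remove adjacent pairs and charge the removed sets to new $J$ vertices, arguing that the surplus $|\mathcal{C}|-|J|\ge 1$ is preserved because the vertex count of $G$ exceeds the $(l+1)k+l-1$ threshold.
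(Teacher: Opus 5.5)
Your counting step is correct and coincides with the paper's Lemma~\ref{strict-key-bound}: in a yes-instance, the union of the components of $G-S$ satisfies the hypothesis of Lemma~\ref{strict-LOS-Lemma}. That is only the existence half, though. The content of the statement is finding a crown without knowing $S$, and there your proposal gives no algorithm. The ``weighted expansion lemma'' is neither stated nor proved, the no-answer criterion is left as ``some threshold'', and you yourself flag the non-adjacency of the extracted $l$-sets as an unresolved obstacle, offering only a tentative charging idea.

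Moreover, the LP you chose points the wrong way. Your dual requires $\sum_{v\in P}x_v\ge 1$ for every connected $l$-set $P$. So no component of $G[I]$ (which is itself a connected $l$-set) can ever be all-zero, which contradicts your intended reading that $\{x_v=0\}$ is the $I$-side. Also, the indicator vector of a solution $S$ is not even feasible for this LP, since the surviving components are exactly connected $l$-sets that $S$ does not hit.

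The missing idea is to use the covering LP over connected sets of size $l+1$, which the indicator of $S$ does satisfy, and to prove the structural fact of Lemma~\ref{LP-lemma}: for a minimal strict ECOC crown decomposition $(I,J,R)$, every optimal LP solution has $I=0$ and $J=1$. The proof is an exchange argument. All neighbours of all-zero components of $G[I]$ are forced to $1$. If some components of $G[I]$ carry positive weight, there are two cases. Either the all-zero components already contain a strict crown with a smaller $I$ (by Lemma~\ref{strict-LOS-Lemma}), contradicting minimality. Or raising all of $J$ to $1$ and zeroing the positive components stays feasible and strictly lowers the objective, using the matching $M$ and the surplus component.

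With this in hand, your non-adjacency obstacle disappears. Take the connected components of size $l$ of $G[\{x_v=0\}]$. They are pairwise non-adjacent by definition, and by the $(l+1)$-set constraints all their neighbours lie in $\{x_v=1\}$. Running the bipartite VC-crown algorithm of Lemma~\ref{q-expansion-finding} between these components and $\{x_v=1\}$ then either returns an ECOC crown decomposition or, by the structural lemma together with your counting step, certifies a no-instance.
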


We postpone the proof of Lemma~\ref{RR3-lemma} and the description of the corresponding $|V(G)|^{O(l)}$ time algorithm to the next section.
Based on Lemma~\ref{RR3-lemma}, we obtain the following reduction rule.

\begin{reduction}\label{lp-reduction}
    Call the algorithm mentioned in Lemma~\ref*{RR3-lemma} to compute an ECOC crown decomposition $(I, J, R)$.
    Return the equivalent instance $(G\setminus (J \cup I) , k - |J|)$.
    If the algorithm cannot find such an ECOC crown decomposition, return no to indicate that there is no solution.
\end{reduction}

Now, we analyze the kernel size and the running time of the algorithm.
For Reduction Rules~\ref*{basic-reduction-1}--\ref*{lp-reduction}, the kernelization algorithm either returns a strictly smaller equivalent instance or correctly concludes that no solution exists.
When Reduction Rule~\ref*{bound-reduction} is applied, the algorithm terminates and returns an instance $(G, k)$ with $|V(G)| < (l + 1)k + l$.
Thus, the resulting kernel has at most $(l + 1)k + l - 1$ vertices.

Regarding the running time, Reduction Rules~\ref*{bound-reduction}--\ref*{basic-reduction-2} clearly run in polynomial time.
Reduction Rule~\ref*{lp-reduction} runs in $|V(G)|^{O(l)}$ time by Lemma~\ref*{RR3-lemma}.
Furthermore, each application of Reduction Rules~\ref*{basic-reduction-1}--\ref*{lp-reduction} decreases the number of vertices in the instance by at least one.
Thus, these reduction rules are applied at most $|V(G)|$ times in total.
Consequently, the kernelization algorithm runs in $|V(G)|^{O(l)}$ time.
We have the following theorem.

\begin{theorem}
    \textsc{$l$-Exact Component Order Connectivity} admits a kernel with $(l + 1)k + l - 1$ vertices that can be computed in $|V(G)|^{O(l)}$ time.
\end{theorem}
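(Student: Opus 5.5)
The theorem is essentially the assembly of the pieces already stated; the real content lies in Lemma~\ref{RR3-lemma}. I would proceed as follows.

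First, check that each rule is safe and makes progress. Reduction Rules~\ref{basic-reduction-1} and~\ref{basic-reduction-2} are safe for simple reasons. A component of size less than $l$ can never survive, so all of it must be deleted. A component of size exactly $l$ is optimally left untouched: deleting any of its vertices only leaves smaller pieces that must also be deleted. Reduction Rule~\ref{lp-reduction} is safe by Lemma~\ref{$l$-ECOC-correctness} whenever a crown is returned. When no crown is returned, Lemma~\ref{RR3-lemma} certifies that the instance is a no-instance. Each of Rules~\ref{basic-reduction-1}--\ref{lp-reduction} strictly decreases $|V(G)|$. For Rule~\ref{lp-reduction}, note that $I\neq\emptyset$ in any crown we find; I would make sure the algorithm returns a nonempty, in fact strict, decomposition via Lemma~\ref{strict-LOS-Lemma}. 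Hence there are at most $n$ iterations, each taking $n^{O(l)}$ time, and the output has at most $(l+1)k+l-1$ vertices by Rule~\ref{bound-reduction}.

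The main work is proving Lemma~\ref{RR3-lemma}. My plan is an LP relaxation in the style of Kumar--Lokshtanov.

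\textbf{Step 1 (enumerate candidate components).} Enumerate all connected vertex sets $P$ of size exactly $l$; there are $n^{O(l)}$ of them.

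\textbf{Step 2 (write the LP).} Use variables $x_v\in[0,1]$ for deletion and $y_P\in[0,1]$ for keeping $P$ as a surviving component. The constraints are:
\begin{itemize}
\item for every vertex $v$, $x_v+\sum_{P\ni v}y_P=1$;
\item for every pair of adjacent vertices $u\in P$, $w\notin P$, $y_P\le x_w$, so that a surviving set must have its whole boundary deleted.
\end{itemize}
The objective is to minimize $\sum_v x_v$.

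\textbf{Step 3 (counting argument).} If the LP optimum exceeds $k$, answer no. Otherwise the optimum is at most $k$ while $n\ge (l+1)k+l$. A fractional solution "pays" $\sum_v x_v\le k$. The surviving mass $\sum_P y_P$ then equals $(n-\sum x_v)/l \ge (n-k)/l \ge k+1$. The deleted mass $\sum x_v$ is at most $k$. So the surviving mass exceeds $\sum x_v$, and this surplus is what should yield a strict crown.

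\textbf{Step 4 (extract the crown).} To turn the surplus into a crown, I would use complementary slackness or a Hall-type argument on the bipartite graph between sets $P$ with $y_P>0$ and their boundary vertices. If no set $I$ of disjoint size-$l$ components with $|\mathcal{C}|\ge |N(I)|+1$ existed, Hall's theorem would let every family of candidate components be charged injectively to boundary vertices. This would force $\sum_P y_P\le \sum_v x_v$ for the support, contradicting the surplus. The witness set is then fed into Lemma~\ref{strict-LOS-Lemma}.

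The main obstacle is Step 4. Overlapping fractional sets $P$ must be converted into disjoint components of $G[I]$ that are exactly the connected components, with no extra adjacency inside $I$. I would first try to show that the LP has an optimal solution whose support sets are pairwise disjoint or non-adjacent, by uncrossing. Alternatively, I would iteratively pick a set $P$ with $y_P$ maximal in the surplus region and argue via the constraint $y_P\le x_w$ that the chosen family's neighborhood is small. This uses that Rules~\ref{basic-reduction-1} and~\ref{basic-reduction-2} removed isolated small components, so every candidate has nonempty boundary.
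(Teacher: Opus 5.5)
Your assembly of Reduction Rules 1--4 is correct and is exactly the paper's argument: safeness, strict decrease of $|V(G)|$, at most $n$ rounds of $n^{O(l)}$ time, and the size bound from Rule 1. Simply invoking Lemma~\ref{RR3-lemma} would finish the theorem. The gap is in your own proof of that lemma, and Step~4 is not merely unfinished: its core claim, that LP surplus yields a strict crown, is false.

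Take $l=2$, $k=1$, $G=K_5$. Rules 1--3 do not apply, and $n=5=(l+1)k+l$. Setting $x_v=1/5$ for every vertex and $y_P=1/5$ for each of the ten edges $P$ is feasible for your LP: each vertex lies in four edges, and every constraint $y_P\le x_w$ holds with equality. So the LP optimum is at most $1=k$ and Step~3 does not reject. Any optimal solution then has $\sum_P y_P=(5-\sum_v x_v)/2\ge 2>\sum_v x_v$. Yet $K_5$ has no ECOC crown decomposition with $I\neq\emptyset$: such an $I$ must be a single edge, and its three neighbours cannot be matched injectively into one component. And $(K_5,1)$ is a no-instance. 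More generally, on $K_m$ with $l=2$ your LP value is at most $1$ while the true optimum is $m-2$. The surplus comes from the counting alone and certifies nothing.

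The Hall argument fails exactly where you worried. The no-crown hypothesis only constrains families of pairwise disjoint, pairwise non-adjacent $l$-sets. The LP support may instead consist of heavily overlapping sets: ten sets on five vertices above, so no injective charging into vertices exists at all. Uncrossing cannot repair this, since in $K_5$ a family of pairwise disjoint, non-adjacent edges has at most one member and so carries $y$-mass at most $1<2$.

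The missing idea is to stop deriving the crown from fractional counting. The paper proceeds in three steps.
\begin{itemize}
\item[(i)] It gets existence of a strict crown from an integral solution $S$ of a yes-instance (Lemma~\ref{strict-key-bound}, via Lemma~\ref{strict-LOS-Lemma} applied to $V(G)\setminus S$).
\item[(ii)] It proves a persistency property for the hitting-set relaxation WECOC-LP, whose constraints are $\sum_{v\in C}x_v\ge 1$ over connected $(l+1)$-sets $C$. Every optimal solution sets a minimal strict crown to $I=0$, $J=1$ (Lemma~\ref{LP-lemma}). This uses an exchange argument relying on minimality and strictness. Raising $J$ to $1$ and zeroing the components of positive mass keeps feasibility, because every connected $(l+1)$-set meeting $I$ meets $J$, and it strictly lowers the objective.
\item[(iii)] It searches with Lemma~\ref{q-expansion-finding} in the bipartite graph between the size-$l$ components of $G[\{x_v=0\}]$ and $\{x_v=1\}$, answering no if nothing is found.
\end{itemize}
The no-answer is justified by contraposition of (i) and (ii), so no surplus-to-crown implication is ever needed. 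On $K_5$ the WECOC-LP has the unique optimum $x\equiv 1/3$, and the algorithm correctly rejects. To salvage your route you would need an analogue of (ii) for your component LP, which the proposal does not supply.
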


When $l = 2$, this immediately yields the following theorem for the \textsc{Deletion to Induced Matching} problem.

\begin{theorem}
    \textsc{Deletion to Induced Matching} admits a kernel with $3k + 1$ vertices.
\end{theorem}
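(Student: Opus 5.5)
This theorem is the case $l=2$ of the preceding theorem, so the proof is a specialization rather than a new argument. First I check that the problems coincide. An instance of \textsc{Deletion to Induced Matching} asks for a set $S$ with $|S|\le k$ such that $G-S$ is an induced matching. Here I take an induced matching to mean a graph in which every vertex has degree exactly one; this is the convention under which the paper identifies the problem with $2$-ECOC. A graph has every vertex of degree exactly one precisely when each of its connected components is a single edge, that is, has exactly $2$ vertices. So the two problems have identical yes-instances, and $(G,k)$ is an instance of one exactly when it is an instance of the other.

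Next I apply the kernelization algorithm behind the preceding theorem with $l=2$. The algorithm exhaustively applies Reduction Rules~\ref{bound-reduction}--\ref{lp-reduction}. Rules~\ref{basic-reduction-1} and~\ref{basic-reduction-2} are safe by inspection. Rule~\ref{lp-reduction} is safe because it relies on Lemma~\ref{RR3-lemma}, together with Lemma~\ref{$l$-ECOC-correctness} for the equivalence of $(G,k)$ and $(G\setminus(J\cup I),k-|J|)$. By Rule~\ref{bound-reduction}, the algorithm stops with an instance on at most $(l+1)k+l-1=3k+1$ vertices. The running time is $|V(G)|^{O(l)}=|V(G)|^{O(1)}$ because $l=2$ is a constant, so the procedure is a genuine polynomial-time kernelization.

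One detail needs checking: whenever Rule~\ref{lp-reduction} answers ``no'', the algorithm should output a constant-size trivial no-instance, for example a single vertex with $k=0$. This instance has at most $3k+1$ vertices, so the stated bound still holds. I do not expect any real obstacle in this proof, since all the difficulty sits in Lemma~\ref{RR3-lemma}, which is assumed. The only point needing care is the identification of \textsc{Deletion to Induced Matching} with $2$-ECOC, in particular that isolated vertices are forbidden in the target graph. If a different convention for induced matchings were intended, I would instead cite the paper's stated equivalence with the AIM setting of~\cite{liu2025improved}.
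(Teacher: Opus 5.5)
Your proposal is correct and matches the paper, which obtains this theorem directly by setting $l=2$ in the general $(l+1)k+l-1$ bound to get $3k+1$. Your extra checks fill in details the paper leaves implicit: that \textsc{Deletion to Induced Matching} coincides with $2$-ECOC, that $|V(G)|^{O(l)}$ is polynomial for $l=2$, and that a ``no'' answer can be replaced by a one-vertex instance with $k=0$.
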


\section{The proof of Lemma~\ref*{RR3-lemma}} 
In this section, we prove Lemma~\ref*{RR3-lemma} and introduce the algorithm used in Reduction Rule~\ref*{lp-reduction}.

Firstly, we present an (incomplete) Integer Programming model for $l$-ECOC.  
We introduce $n = |V(G)|$ integer variables $\{x_v\}_{v\in V(G)}$, where $x_v = 1$ indicates that $v$ is selected in the solution $S$, and $x_v = 0$ indicates that it is not. 
Clearly, we have that $|S| = \sum_{v\in V} x_v$.
To ensure $C\cap S \neq \emptyset$ for each connected set $C$ of size $l + 1$, we introduce the constraint $\sum_{v\in C} x_v\geq 1$.
This gives us the following IP model, denoted as WECOC-IP:
\[
    \begin{split}
        \mbox{Min } &\sum_{v\in V} x_v\\  
           \mbox{Subject to } & \sum_{v\in C} x_v\geq 1, \mbox{  for every connected set C of size } l + 1,\\     
        & x_v\in \{0, 1\},\quad \forall v\in V.
    \end{split}
\]
Note that a solution to WECOC-IP does not necessarily correspond to a feasible solution for $l$-ECOC,
as the connected components in the remaining graph may have a size less than $l$.
The LP relaxation of this model, denoted by WECOC-LP, is obtained by replacing the constraint $x_v\in \{0, 1\}$ with the constraint $0\leq x_v \leq 1$.
Let $S_L$ be an optimal solution to WECOC-LP with objective value $L = \sum_{v \in V(G)} x_v$. 
For any set of vertices $X \subseteq V$, we use the notation $X = 1$ (resp. $X = 0$) to signify that $x_v = 1$ (resp. $x_v = 0$) for all $v \in X$.

The following lemma establishes that a strict ECOC crown decomposition is guaranteed to exist whenever the vertex set is sufficiently large. 
This size condition directly determines the upper bound on the kernel size.

\begin{lemma}\label{strict-key-bound}
Let $(G, k)$ be an instance where Reduction Rules 1--\ref*{basic-reduction-2} are not applicable.
If $(G, k)$ is a yes-instance, then there exists a strict ECOC crown decomposition in $G$.
\end{lemma}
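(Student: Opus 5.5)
The plan is to read off a candidate set $I$ directly from a solution and then invoke Lemma~\ref{strict-LOS-Lemma}, which turns such a set into a strict ECOC crown decomposition. The only quantitative input is that Reduction Rule~\ref{bound-reduction} is not applicable, so $n=|V(G)|\geq (l+1)k+l$.

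First, fix a solution $S$ with $|S|\leq k$, so that every connected component of $G-S$ has exactly $l$ vertices. Set $I=V\setminus S$ and let $\mathcal{C}$ be the collection of connected components of $G[I]=G-S$. Then every $C\in\mathcal{C}$ has $|C|=l$, and $|\mathcal{C}|=(n-|S|)/l$.

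Second, let $J=N(I)$. Since $I=V\setminus S$, we have $J\subseteq S$ and hence $|J|\leq |S|\leq k$. Counting components,
\[
|\mathcal{C}|=\frac{n-|S|}{l}\geq \frac{(l+1)k+l-k}{l}=k+1\geq |J|+1 .
\]
In particular $\mathcal{C}\neq\emptyset$, so $I$ is non-empty.

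Third, all hypotheses of Lemma~\ref{strict-LOS-Lemma} now hold: $I$ is non-empty, $J=N(I)$, every component of $G[I]$ has size $l$, and $|\mathcal{C}|\geq |J|+1$. The lemma therefore yields a strict ECOC crown decomposition $(I',J',R')$ of $G$ with $\emptyset\neq I'\subseteq I$, which is exactly what the statement asks for.

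The non-applicability of Reduction Rules~\ref{basic-reduction-1} and~\ref{basic-reduction-2} is not needed for this existence argument. I expect it to matter only later, when the decomposition has to be found algorithmically without knowing $S$. The only step requiring care is the counting inequality, and in particular checking that $|S|\leq k$ (rather than $|S|=k$) still gives $|\mathcal{C}|\geq k+1$. It does, because a smaller $S$ only increases $n-|S|$.
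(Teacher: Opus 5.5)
Your proposal is correct and follows the paper's route. Both take a solution $S$, set $I=V\setminus S$, use the non-applicability of Reduction Rule~1 to get $|\mathcal{C}|\geq k+1\geq |S|+1$, and invoke Lemma~\ref{strict-LOS-Lemma}. Your explicit remark that $N(I)\subseteq S$, hence $|N(I)|\leq |S|$, fills in a step the paper leaves implicit. Using an arbitrary solution rather than a minimum one changes nothing.
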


\begin{proof}
    Consider any minimum solution $S$, let $A = V(G)\setminus S$.
    Let $\mathcal{C}$ be the set of the connected components of $G[A]$.
    Since $S$ is a feasible solution, we have that $\forall C\in \mathcal{C}, |C| = l$.
    Since Reduction Rule 1 is not applicable, we have that $|V(G)| \geq (l + 1)k + l$.
    Since $(G, k)$ is a yes-instance, we have that $|S|\leq k, |A| = |V(G)| - |S|\geq lk + l$ and then $|\mathcal{C}|\geq k + 1\geq |S| + 1$.  
    Hence, by Lemma \ref*{strict-LOS-Lemma}, we can conclude that $G$ has a strict ECOC crown decomposition $(I', J', R')$ with $\emptyset \neq I'\subseteq A$.
\end{proof}

We call a strict ECOC crown decomposition $(I,J,R)$ \emph{minimal} if there do not exist proper subsets $I' \subset I$ and $J' \subset J$ such that $(I',J',V\setminus (I'\cup J'))$ is a strict ECOC crown decomposition.

\begin{lemma}\label{LP-lemma}
    Let $(G, k)$ be a yes-instance where Reduction Rules 1--\ref*{basic-reduction-2} are not applicable.
    For any minimal strict ECOC crown decomposition $(I, J, R)$, an optimal LP solution $S_L$ to WECOC-LP satisfies $I = 0$ and $J = 1$.
\end{lemma}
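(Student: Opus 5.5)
The plan is an exchange argument on an arbitrary optimal LP solution, plus a Hall-surplus property that I will extract from minimality. Fix an optimal solution $S_L=\{x_v\}$ to WECOC-LP and a minimal strict ECOC crown decomposition $(I,J,R)$, and let $\mathcal{C}$ be the components of $G[I]$, each of size $l$. For a set $X$ write $x(X)=\sum_{v\in X}x_v$.

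\textbf{Step 1 (a matching lower bound on $x(I)$).} For every injective mapping $M:J\to\mathcal{C}$ as in Definition~\ref{lECOC-LOS}, and every $j\in J$, the set $\{j\}\cup V(M(j))$ is connected and has exactly $l+1$ vertices. Its LP constraint gives $x(M(j))\ge 1-x_j$. Summing over $j\in J$ and using injectivity yields
\[
x\Bigl(\bigcup_{j\in J}V(M(j))\Bigr)\ \ge\ \sum_{j\in J}(1-x_j).
\]

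\textbf{Step 2 (exchange).} Define $x'$ to equal $x$ on $R$, with $x'_v=0$ on $I$ and $x'_v=1$ on $J$.

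Feasibility of $x'$: take any connected set $C$ of size $l+1$. If $C$ meets $J$, then $x'(C)\ge 1$. If $C$ avoids $I$, then $x'(C)\ge x(C)\ge 1$, since $x'$ only increased coordinates outside $I$. Otherwise $C$ meets $I$ but avoids $J$. Because $N(I)\subseteq J$ and $C$ is connected, $C$ would then lie inside a single component of $G[I]$, which has only $l$ vertices; this is impossible.

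Cost of $x'$: we have $x'$-cost minus $x$-cost equal to $\sum_{j\in J}(1-x_j)-x(I)$, which is at most $0$ by Step~1. Since $S_L$ is optimal, equality must hold, so
\[
x(I)=\sum_{j\in J}(1-x_j).
\]

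\textbf{Step 3 (surplus from minimality; the main obstacle).} Consider the bipartite graph between $J$ and $\mathcal{C}$ used in Lemma~\ref{strict-LOS-Lemma}. I claim that every nonempty $Y\subseteq J$ has at least $|Y|+1$ neighbours in $\mathcal{C}$.

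Suppose instead that some $Y$ has $|N_{\mathcal{C}}(Y)|\le |Y|$. Among such sets, take a maximal one (equivalently, a tight set, $|N_{\mathcal{C}}(Y)|=|Y|$, exists since Hall's condition holds). Remove $Y$ and the components in $N_{\mathcal{C}}(Y)$, obtaining $I'$ and $J'=J\setminus Y$.
\begin{itemize}
\item Components not in $N_{\mathcal{C}}(Y)$ are not adjacent to $Y$, so $N(I')\subseteq J'$.
\item The original matching sends $Y$ into $N_{\mathcal{C}}(Y)$. By tightness it uses all of $N_{\mathcal{C}}(Y)$, so it restricts to an injective map $J'\to\mathcal{C}\setminus N_{\mathcal{C}}(Y)$.
\item The count satisfies $|\mathcal{C}\setminus N_{\mathcal{C}}(Y)|\ge |\mathcal{C}|-|Y|\ge |J'|+1$, so in particular $I'\neq\emptyset$.
\end{itemize}
This gives a strict crown decomposition with proper subsets, contradicting minimality. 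The case $Y=J$ needs care, because then $J'=\emptyset$ is not a proper-subset contradiction in the same way. Here I use that Reduction Rule~\ref{basic-reduction-2} does not apply: if $|N_{\mathcal{C}}(J)|\le |J|<|\mathcal{C}|$, some component of $G[I]$ has no neighbour in $J$, hence none at all since $N(I)\subseteq J$. It is then an isolated component of size $l$, which Rule~\ref{basic-reduction-2} would have removed.

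By Hall's theorem the surplus property gives, for each component $C\in\mathcal{C}$, a matching of $J$ into $\mathcal{C}\setminus\{C\}$.

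\textbf{Step 4 (conclusion).} Apply Step~1 with a matching of $J$ into $\mathcal{C}\setminus\{C\}$. This gives $x(I)-x(C)\ge\sum_{j\in J}(1-x_j)=x(I)$, using the equality from Step~2. Hence $x(C)=0$, and since LP values are nonnegative, $C=0$ for every $C\in\mathcal{C}$; that is, $I=0$. The equality from Step~2 then gives $\sum_{j\in J}(1-x_j)=0$, and since each $x_j\le 1$ this forces $J=1$.
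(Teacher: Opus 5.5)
Your proof is correct, but it pins down the LP values on $I$ differently from the paper. The paper fixes $S_L$ and splits $\mathcal{C}$ into the all-zero components $\mathcal{C}_1$ and the rest $\mathcal{C}_2$. It assumes $\mathcal{C}_2\neq\emptyset$ and splits into two cases. If $|\mathcal{C}_1|\ge|N(V(\mathcal{C}_1))|+1$, Lemma~\ref{strict-LOS-Lemma} gives a smaller strict crown inside $\mathcal{C}_1$, contradicting minimality. Otherwise $\mathcal{C}_2$ has surplus over $J_2$, so the matching leaves some positive component of $\mathcal{C}_2$ unmatched, and the same exchange you use (make $I=0$, $J=1$) strictly lowers the objective.

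You instead separate the LP argument from the combinatorics:
\begin{itemize}
\item the exchange plus optimality gives the tightness identity $x(I)=\sum_{j\in J}(1-x_j)$;
\item minimality alone gives the Hall surplus $|N_{\mathcal{C}}(Y)|\ge|Y|+1$ for every nonempty $Y\subseteq J$;
\item Hall's theorem then supplies, for each component $C$, a $J$-saturating matching that avoids $C$, and tightness forces $x(C)=0$.
\end{itemize}

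Your route is self-contained, since it does not need the alternating-path Lemma~\ref{strict-LOS-Lemma}. It also isolates an LP-free structural property of minimal strict crowns. Moreover, your sub-decompositions always have both $I'$ and $J'$ proper, so the argument works even under the literal reading of minimality. The paper's first case only guarantees that $I'$ is proper; its $J'$ may equal $J$. The paper's route, in turn, uses minimality in only one branch and reads the contradiction directly off the LP values.

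Two small simplifications to your Step~3:
\begin{itemize}
\item You need not take $Y$ maximal. The given matching $M$ already makes every $Y$ with $|N_{\mathcal{C}}(Y)|\le|Y|$ tight.
\item The case $Y=J$ needs no appeal to Rule~\ref{basic-reduction-2}. With $I'\neq\emptyset$, $(I',\emptyset,V\setminus I')$ is itself a strict crown decomposition, and $\emptyset$ is a proper subset of the nonempty set $J$.
\end{itemize}
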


\begin{proof}
    Let $\mathcal{C}$ be the set of connected components of $G[I]$. We partition $\mathcal{C}$ into two sets: let $\mathcal{C}_1$ be the set of components where every vertex is assigned a value of $0$ by $S_L$, and let $\mathcal{C}_2 = \mathcal{C} \setminus \mathcal{C}_1$ be the remaining components (i.e., each component in $\mathcal{C}_2$ contains at least one vertex $u$ with $x_u > 0$). We also partition $J$ into $J_1 = N(V(\mathcal{C}_1))$ and $J_2 = J \setminus J_1$.

For any component $C \in \mathcal{C}_1$ and any adjacent vertex $v \in J_1$, the set $V(C) \cup \{v\}$ forms a connected set of size $l+1$. 
    The LP constraints require $\sum_{u \in V(C)} x_u + x_v \geq 1$. 
    Since $x_u = 0$ for all $u \in V(C)$ by the definition of $\mathcal{C}_1$, it must be that $x_v = 1$. 
    Thus, $x_v = 1$ holds for all $v \in J_1$. 
    We will show that $\mathcal{C}_2 = \emptyset$ to complete this proof. 
    Assume for the sake of contradiction that $\mathcal{C}_2 \neq \emptyset$. 
    We consider two cases based on the cardinality of $\mathcal{C}_1$.

    If $|\mathcal{C}_1| \geq |J_1| + 1$, Lemma~\ref*{strict-LOS-Lemma} guarantees the existence of a strict ECOC crown decomposition $(I', J', R')$ with $I' \subseteq V(\mathcal{C}_1)$ and $J' \subseteq J_1$. 
    Since $\mathcal{C}_2 \neq \emptyset$, it holds that $I'\subset I$, which directly contradicts the assumption that $(I, J, R)$ is a minimal strict ECOC crown decomposition.

    If $|\mathcal{C}_1| \leq |J_1|$, then it holds that $|\mathcal{C}_2| \geq |J_2| + 1$ since $|\mathcal{C}| \geq |J| + 1$.
    Let $J'_2 = \{v \in J \mid x_v < 1\}$. 
    Since $x_v = 1$ for all $v \in J_1$, we have $J'_2 \subseteq J_2$, which yields $|\mathcal{C}_2| \geq |J'_2| + 1$. 
    Recall that there exists an injective mapping $M: J \rightarrow \mathcal{C}$ \liu{such that every $v \in J$ is adjacent to some vertex in $M(v)$}. 
    For any vertex $v \in J'_2$, the set $V(M(v)) \cup \{v\}$ forms a connected set of size $l+1$, so $\sum_{u \in V(M(v))} x_u + x_v \geq 1$. 
    Since $x_v < 1$, we deduce that $\sum_{u \in V(M(v))} x_u \geq 1 - x_v > 0$. 
    Therefore, $M(v)$ must contain at least one vertex with a positive value, meaning $M(v) \in \mathcal{C}_2$. 
    Let $\mathcal{C}' = \{M(v) \mid v \in J'_2\}$. 
    It is clear that $\mathcal{C}' \subseteq \mathcal{C}_2$. 
    Since $|\mathcal{C}'| = |J'_2|$ and $|\mathcal{C}_2| \geq |J'_2| + 1$, it holds that $|\mathcal{C}'|<|\mathcal{C}_2|$, which implies that $\mathcal{C}' \subset \mathcal{C}_2$.
    \liu{See Fig. \ref*{Fig:3} for an illustration.}

    \begin{figure}[!t]
        \centering
        \includegraphics[scale=0.5]{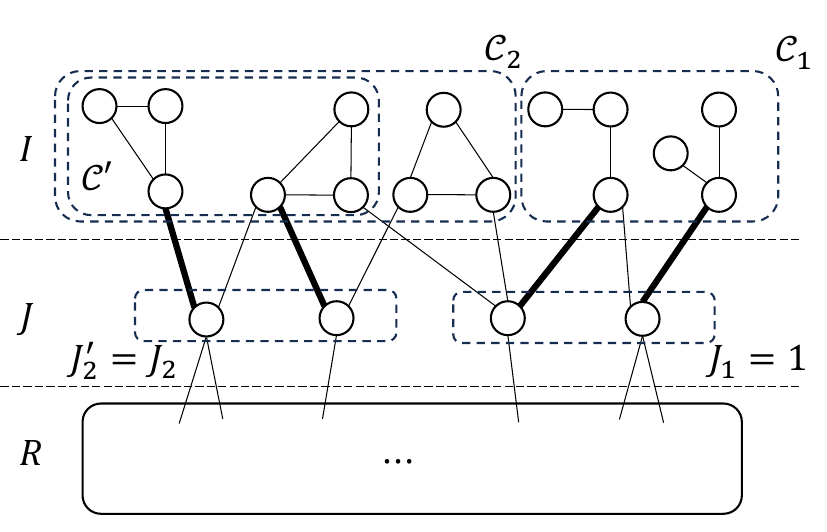}
        \caption{A minimal strict ECOC crown decomposition $(I, J, R)$ where $l = 3$ and $J_2' = J_2$. The four bold edges form an injective matching from $J$ to the collection of connected components $\mathcal{C}$ of $G[I]$.
        }
        \label{Fig:3}
    \end{figure}

    We construct a new LP solution $S_{L'}$ as follows. 
    For each vertex $v \in J_2$, we set $x_v = 1$ (increasing its value by $1 - x_v$). 
    For each vertex $u \in V(\mathcal{C}_2)$, we set $x_u = 0$ (decreasing its value by $x_u$). 
    All other variables remain unchanged. 
     
    \liu{We now verify the feasibility of $S_{L'}$. Consider any connected set $K \subseteq V(G)$ of size $l+1$.
    We consider the following two cases.
If $K \cap I \neq \emptyset$, then $K$ cannot be fully contained within $I$ because every connected component in $G[I]$ has a size of exactly $l$. Thus, $K$ must contain at least one vertex from $N(I)$.
Since $N(I) \subseteq J$, $K$ contains some vertex $v \in J$. By our construction and the fact that $J_1$ was already set to $1$, $S_{L'}$ satisfies $x_v = 1$ for all $v \in J$. Therefore, $\sum_{w \in K} x_w \geq x_v = 1$. If $K \cap I = \emptyset$, then $K \subseteq V(G) \setminus I$. For any vertex $w \in V(G) \setminus I$, its value in $S_{L'}$ is either increased to $1$ (if $w \in J_2$) or left unchanged. Because the original solution $S_L$ is feasible, we have $\sum_{w \in K} x_w \geq 1$. Hence, $S_{L'}$ is a feasible LP solution.}

    Finally, we evaluate the objective value $L'$ of the new solution. The construction increases the sum by $\sum_{v \in J'_2} (1 - x_v)$ \liu{because $1 - x_v = 0$ for all $v \in J_2 \setminus J'_2$. Concurrently}, it decreases the sum by $\sum_{u \in V(\mathcal{C}_2)} x_u$. 
    \liu{Based on the LP constraints for the sets $V(M(v)) \cup \{v\}$, we have established that $\sum_{u \in V(M(v))} x_u \geq 1 - x_v$. Summing this over all $v \in J'_2$ yields:}
    $$ \sum_{u \in V(\mathcal{C}')} x_u = \sum_{v \in J'_2} \sum_{u \in V(M(v))} x_u \geq \sum_{v \in J'_2} (1 - x_v). $$
    \liu{As shown earlier, $\mathcal{C}' \subset \mathcal{C}_2$, meaning the set of components $\mathcal{C}_2 \setminus \mathcal{C}'$ is non-empty. By the definition of $\mathcal{C}_2$, every component in $\mathcal{C}_2 \setminus \mathcal{C}'$ contains at least one vertex with a positive value in $S_L$. This implies that $\sum_{u \in V(\mathcal{C}_2 \setminus \mathcal{C}')} x_u > 0$. Consequently, we obtain the strict inequality:
    $$ \sum_{u \in V(\mathcal{C}_2)} x_u = \sum_{u \in V(\mathcal{C}')} x_u + \sum_{u \in V(\mathcal{C}_2 \setminus \mathcal{C}')} x_u > \sum_{v \in J'_2} (1 - x_v). $$}
     
    This strict inequality implies that $L' < L$, contradicting the assumption that $S_L$ is an optimal LP solution.

    Therefore, it holds that $\mathcal{C}_2 = \emptyset$, meaning $x_u = 0$ for all $u \in I$. 
    Consequently, since the mapping $M$ matches every $v \in J$ to a component in $I$ where all vertices are $0$, the LP constraint requires $x_v = 1$ for all $v \in J$. 
    This concludes the proof.
\end{proof}

Given an optimal solution $S_L$ to WECOC-LP, define $A := \{v \in V \mid x_v = 0\}$ and $B := \{v \in V \mid x_v = 1\}$.
By Lemma~\ref*{LP-lemma}, there exists an ECOC crown decomposition $(I, J, R)$ such that $I \subseteq A$ and $J \subseteq B$.
We are now ready to prove Lemma~\ref*{RR3-lemma}.

\begin{proof}[Proof of Lemma~\ref*{RR3-lemma}]
The algorithm begins by solving  WECOC-LP in $|V(G)|^{O(l)}$ time. 
    Let $A = \{v \in V \mid x_v = 0\}$ and $B = \{v \in V \mid x_v = 1\}$. 
     
    We construct an auxiliary bipartite graph $G' = (V', E')$ with $V' = (A', B')$ as follows: 
    \begin{itemize}
        \item Each vertex $v \in A'$ corresponds to a connected component of size $l$ of $G[A]$.
        \item Each vertex $v \in B'$ corresponds to a vertex in the vertex set $B$. 
        \item An edge $(a, b)$ exists in $E'$ if and only if the component in $G[A]$ corresponding to $a$ is adjacent to the vertex corresponding to $b$ in $G$. 
    \end{itemize}

    Next, we apply the polynomial time algorithm from Lemma \ref*{q-expansion-finding} to $G'$. 
    If it fails to find a VC crown decomposition, we conclude that $(G, k)$ is a no-instance. 
    If it returns a VC crown decomposition $(I', J', R')$ for $G'$, we construct an ECOC crown decomposition $(I^*, J^*, R^*)$ for $G$ such that $I^*$ is the vertices in the components corresponding to the vertices in $I'$ and $J^*$ is the vertices corresponding to the vertices in $J'$. 
    This algorithm clearly runs in $|V(G)|^{O(l)}$ time. We now prove its correctness.

    Let $\mathcal{C}^*$ be the set of the connected components of $G[I^*]$.
    Let us check the three conditions in the definition of ECOC crown decomposition in $(I^*, J^*, R^*)$.
    \begin{enumerate}
        \item Since $I'$ is an independent set, it holds that $\forall C^*\in \mathcal{C}^*, |C^*| = l$.
        \item Since there is no edge between $C'$ and $R'$, we have that $N(I^*)\subseteq J^*$.

        \item Since there is an injective mapping (matching) $M': J'\rightarrow I'$ such that, $\forall x\in J'$, $x$ is adjacent to $M'(x)$, 
        there is an injective mapping (matching from vertices to components) $M: J^*\rightarrow \mathcal{C}^*$ such that $\exists u\in M(x), xu\in E$ holds for all $x\in J^*$ by transforming the vertices in $M'$ into the corresponding vertices or components in $G$.
    \end{enumerate}

    Since $I'\neq \emptyset$, we can say $I^* \neq \emptyset$.
    So $(I^*, J^*, R^*)$ is an ECOC crown decomposition of $V$ with $I^* \neq \emptyset$.

    It remains to show that if $(G, k)$ is a yes-instance, the algorithm will not output a no-instance. 
    Assume $(G, k)$ is a yes-instance. 
    By Lemma \ref{strict-key-bound}, there exists a strict ECOC crown decomposition. 
    Let $(I, J, R)$ be a minimal strict ECOC crown decomposition. 
    By Lemma \ref{LP-lemma}, the optimal LP solution sets $0$ to all vertices in $I$ and $1$ to all vertices in $J$. 
    Therefore, $I \subseteq A$ and $J \subseteq B$. 

    Furthermore, since $N_G(I) \subseteq J \subseteq B$, the components of $G[I]$ are completely isolated from any other vertices in $A$. 
    Let the corresponding vertex set of this component set be $I' \subseteq A'$, and let $J$ correspond to $J' \subseteq B'$ in $G'$. 
    We verify that $(I', J', V(G') \setminus (I' \cup J'))$ is a valid VC crown decomposition in $G'$. 
    The set $I'$ is an independent set because $A'$ is one side of a bipartite graph. 
    The condition $N_G(I) \subseteq J$ directly implies $N_{G'}(I') \subseteq J'$. 
    Finally, the injective mapping from $J$ to the components of $G[I]$ translates directly to a matching from $J'$ to $I'$ in $G'$. 
    Thus, a valid VC crown decomposition exists in $G'$, which implies that  Lemma \ref{q-expansion-finding} will find one. 
    This completes the proof.
\end{proof}

\section{Conclusion}
In this paper, we present the first linear vertex kernel with $O(lk)$ vertices for \textsc{$l$-Exact Component Order Connectivity} for any fixed $l\geq 1$. When $l=2$ (known as \textsc{Deletion to Induced Matching}), we improve the previous result from $6k$ to $3k + 1$.
When $l=1$ (known as the classical \textsc{Vertex Cover}), our result matches the well-known result of $2k$,
which implies our result is relatively tight to some extent. 

Note that our algorithm runs in $|V(G)|^{O(l)}$ time, which is no polynomial in $l$. Whether we can obtain an $O(kl)$-vertex kernel for \textsc{$l$-Exact Component Order Connectivity} using time also polynomial with $l$ is left as an open problem for further study.
For the highly related problem \textsc{$l$-Component Order Connectivity}, there are known 
 $O(kl)$-vertex kernels computable in polynomial time~\cite{xiao2017linear,casel2020balanced}.
 However, it appears that their techniques could not be directly adapted to \textsc{$l$-Exact Component Order Connectivity}.
Additionally, establishing non-trivial lower bounds on kernels for this problem 
is also an interesting topic for further study.

\bibliographystyle{splncs04}
\bibliography{lECOC}

@article{xiao2020parameterized,
  title={Parameterized algorithms and kernels for almost induced matching},
  author={Xiao, Mingyu and Kou, Shaowei},
  journal={Theoretical Computer Science},
  volume={846},
  pages={103--113},
  year={2020},
  publisher={Elsevier}
}

@inproceedings{chor2004linear,
  title={Linear kernels in linear time, or how to save k colors in $O(n^2)$ steps},
  author={Chor, Benny and Fellows, Mike and Juedes, David},
  booktitle={International workshop on graph-theoretic concepts in computer science},
  pages={257--269},
  year={2004},
  organization={Springer}
}

@article{mathieson2012editing,
  title={Editing graphs to satisfy degree constraints: A parameterized approach},
  author={Mathieson, Luke and Szeider, Stefan},
  journal={Journal of Computer and System Sciences},
  volume={78},
  number={1},
  pages={179--191},
  year={2012},
  publisher={Elsevier}
}

@inproceedings{DBLP:conf/iwpec/KumarL16,
  author       = {Mithilesh Kumar and
                  Daniel Lokshtanov},
  editor       = {Jiong Guo and
                  Danny Hermelin},
  title        = {A 2lk Kernel for l-Component Order Connectivity},
  booktitle    = {11th International Symposium on Parameterized and Exact Computation,
                  {IPEC} 2016, August 24-26, 2016, Aarhus, Denmark},
  series       = {LIPIcs},
  volume       = {63},
  pages        = {20:1--20:14},
  publisher    = {Schloss Dagstuhl - Leibniz-Zentrum f{\"{u}}r Informatik},
  year         = {2016},
  bibsource    = {dblp computer science bibliography, https://dblp.org}
}

@inproceedings{DBLP:conf/alenex/Abu-KhzamCFLSS04,
  author       = {Faisal N. Abu{-}Khzam and
                  Rebecca L. Collins and
                  Michael R. Fellows and
                  Michael A. Langston and
                  W. Henry Suters and
                  Christopher T. Symons},
  editor       = {Lars Arge and
                  Giuseppe F. Italiano and
                  Robert Sedgewick},
  title        = {Kernelization Algorithms for the Vertex Cover Problem: Theory and
                  Experiments},
  booktitle    = {Proceedings of the Sixth Workshop on Algorithm Engineering and Experiments
                  and the First Workshop on Analytic Algorithmics and Combinatorics,
                  New Orleans, LA, USA, January 10, 2004},
  pages        = {62--69},
  publisher    = {{SIAM}},
  year         = {2004},
  bibsource    = {dblp computer science bibliography, https://dblp.org}
}

@article{buss1993nondeterminism,
  title={Nondeterminism within p\^{}},
  author={Buss, Jonathan F and Goldsmith, Judy},
  journal={SIAM Journal on Computing},
  volume={22},
  number={3},
  pages={560--572},
  year={1993},
  publisher={SIAM}
}

@article{hopcroft1973n,
  title={An $n^{5/2}$ algorithm for maximum matchings in bipartite graphs},
  author={Hopcroft, John E and Karp, Richard M},
  journal={SIAM Journal on computing},
  volume={2},
  number={4},
  pages={225--231},
  year={1973},
  publisher={SIAM}
}

@article{berge1957two,
  title={Two theorems in graph theory},
  author={Berge, Claude},
  journal={Proceedings of the National Academy of Sciences},
  volume={43},
  number={9},
  pages={842--844},
  year={1957},
  publisher={National Acad Sciences}
}

@article{chen2010improved,
  title={Improved upper bounds for vertex cover},
  author={Chen, Jianer and Kanj, Iyad A and Xia, Ge},
  journal={Theoretical Computer Science},
  volume={411},
  number={40-42},
  pages={3736--3756},
  year={2010},
  publisher={Elsevier}
}

@inproceedings{DBLP:conf/stacs/0001N24,
  author       = {David G. Harris and
                  N. S. Narayanaswamy},
  title        = {A Faster Algorithm for Vertex Cover Parameterized by Solution Size},
  booktitle    = {41st International Symposium on Theoretical Aspects of Computer Science,
                  {STACS} 2024, March 12-14, 2024, Clermont-Ferrand, France},
  year         = {2024},
}

@article{kumar2020deletion,
  title={Deletion to Induced Matching},
  author={Kumar, Akash and Kumar, Mithilesh},
  journal={arXiv:2008.09660},
  year={2020}
}

@article{moser2009parameterizedregular,
  title={Parameterized complexity of finding regular induced subgraphs},
  author={Moser, Hannes and Thilikos, Dimitrios M},
  journal={Journal of Discrete Algorithms},
  volume={7},
  number={2},
  pages={181--190},
  year={2009},
  publisher={Elsevier}
}

@article{liu2025improved,
  title={An improved kernel and parameterized algorithm for deletion to induced matching},
  author={Liu, Yuxi and Xiao, Mingyu},
  journal={Theoretical Computer Science},
  volume={1041},
  pages={115215},
  year={2025},
  publisher={Elsevier}
}

@article{chen2001vertex,
  title={Vertex cover: further observations and further improvements},
  author={Chen, Jianer and Kanj, Iyad A and Jia, Weijia},
  journal={Journal of Algorithms},
  volume={41},
  number={2},
  pages={280--301},
  year={2001},
  publisher={Elsevier}
}

@article{nemhauser1974properties,
  title={Properties of vertex packing and independence system polyhedra},
  author={Nemhauser, George L and Trotter Jr, Leslie E},
  journal={Mathematical programming},
  volume={6},
  number={1},
  pages={48--61},
  year={1974},
  publisher={Springer}
}

@article{drange2016computational,
  title={On the computational complexity of vertex integrity and component order connectivity},
  author={Drange, P{\aa}l Gr{\o}n{\aa}s and Dregi, Markus and van’t Hof, Pim},
  journal={Algorithmica},
  volume={76},
  number={4},
  pages={1181--1202},
  year={2016},
  publisher={Springer}
}

@article{xiao2017linear,
  title={Linear kernels for separating a graph into components of bounded size},
  author={Xiao, Mingyu},
  journal={Journal of Computer and System Sciences},
  volume={88},
  pages={260--270},
  year={2017},
  publisher={Elsevier}
}

@article{casel2020balanced,
  title={Balanced crown decomposition for connectivity constraints},
  author={Casel, Katrin and Friedrich, Tobias and Issac, Davis and Niklanovits, Aikaterini and Zeif, Ziena},
  journal={arXiv preprint arXiv:2011.04528},
  year={2020}
}

@article{fomin2010iterative,
  title={Iterative compression and exact algorithms},
  author={Fomin, Fedor V and Gaspers, Serge and Kratsch, Dieter and Liedloff, Mathieu and Saurabh, Saket},
  journal={Theoretical Computer Science},
  volume={411},
  number={7-9},
  pages={1045--1053},
  year={2010},
  publisher={Elsevier}
}

@article{balasubramanian1998improved,
  title={An improved fixed-parameter algorithm for vertex cover},
  author={Balasubramanian, R and Fellows, Michael R and Raman, Venkatesh},
  journal={Information Processing Letters},
  volume={65},
  number={3},
  pages={163--168},
  year={1998},
  publisher={Elsevier}
}

@article{chandran2005refined,
  title={Refined memorization for vertex cover},
  author={Chandran, L Sunil and Grandoni, Fabrizio},
  journal={Information Processing Letters},
  volume={93},
  number={3},
  pages={125--131},
  year={2005},
  publisher={Elsevier}
}

@article{chen2000improvement,
  title={Improvement on vertex cover for low-degree graphs},
  author={Chen, Jianer and Liu, Lihua and Jia, Weijia},
  journal={Networks: An International Journal},
  volume={35},
  number={4},
  pages={253--259},
  year={2000},
  publisher={Wiley Online Library}
}

@inproceedings{niedermeier1999upper,
  title={Upper bounds for vertex cover further improved},
  author={Niedermeier, Rolf and Rossmanith, Peter},
  booktitle={Annual Symposium on Theoretical Aspects of Computer Science},
  pages={561--570},
  year={1999},
  organization={Springer}
}

@article{niedermeier2003efficient,
  title={An efficient fixed-parameter algorithm for 3-Hitting Set},
  author={Niedermeier, Rolf and Rossmanith, Peter},
  journal={Journal of Discrete Algorithms},
  volume={1},
  number={1},
  pages={89--102},
  year={2003},
  publisher={Elsevier}
}

\end{document}